\documentclass[11pt]{article}

\usepackage{amsmath}
\usepackage{amssymb}
\usepackage{amsfonts}
\usepackage{amsthm} 
\usepackage{color}
\usepackage{url}
\usepackage{comment}
\usepackage{caption}
\usepackage{cite}
\usepackage{graphicx}
\topmargin 7pt   
\headheight 12pt 
\headsep 5pt     
\footskip 30pt   
\oddsidemargin 15pt
\evensidemargin 15pt
\textheight 586.5pt  
\textwidth 420pt 
\parindent= 1.5em    

\def\phi{\varphi}

\newtheorem{theorem}{Theorem}
\newtheorem{lemma}[theorem]{Lemma}

\newtheorem{corollary}[theorem]{Corollary}

\newtheorem{definition}[theorem]{Definition}


\renewcommand{\a}{\alpha}
\renewcommand{\b}{\beta}
\newcommand{\la}{<_\alpha}
\newcommand{\lb}{<_\beta}
\newcommand{\ga}{>_\alpha}
\newcommand{\gb}{>_\beta}

\title{ Existence of EFX for Two Additive Valuations }
\author{Ryoga Mahara\thanks{Research Institute for Mathematical Sciences, Kyoto University, Kyoto, 606-8502, Japan.
E-mail: ryoga@kurims.kyoto-u.ac.jp}
}
\date{}
\begin{document}
\maketitle
\begin{abstract}
Fair division of indivisible items is a well-studied topic in Economics and
Computer Science.
The objective is to allocate items to agents in a {\it fair} manner, 
where each agent has a valuation for each subset of items.
Envy-freeness is one of the most widely studied notions of fairness.
Since complete envy-free allocations do not always exist when items are indivisible,
several relaxations have been considered.
Among them, possibly the most compelling one is envy-freeness up to any item (EFX), where no agent envies another agent after the removal of any single item from the other agent's bundle.
However, despite significant efforts by many researchers for several years, 
it is known that a complete EFX allocation always exists only in limited cases.
In this paper, we show that a complete EFX allocation always exists when each agent is of one of two given types, where agents of the same type have identical additive valuations.
This is the first such existence result for non-identical valuations when there are any number of agents and items and no limit on the number of distinct values an agent can have for individual items.
We give a constructive proof, in which we iteratively obtain a Pareto dominating (partial) EFX allocation from an existing partial EFX allocation.
\end{abstract}


\section{Introduction} 
\label{sec: intro}
Fair division of items among competing agents is a fundamental and well-studied problem in Economics and
Computer Science. 
We are given a set $M$ of $m$ items and a set $N$ of $n$ agents with individual preferences.
The goal is to allocate items among $n$ agents in a {\it fair} manner.
When allocating all items to agents, such an allocation is called {\it complete}.
Otherwise, an allocation is called {\it partial}.
In this paper, we consider the {\it indivisible} setting: 
an item can not be split among multiple agents.
Several concepts of fairness have been considered in the literature, 
and one of the most well-studied notions of fairness is {\it envy-freeness}.
Each agent $i$ has a valuation function $v_i: 2^{M} \rightarrow \mathbb{R}_{\ge 0}$ for each subset of items.
An allocation is {\it envy-free} if the utility of each agent's bundle is at least as much as that of any other agent.
Unfortunately, complete envy-free allocations do not always exist when items are indivisible.
We can easily see this even with two players and a single item having positive utility for both of them: 
one of the agents has to receive the item and the other agent envies her.
This motivates the study of relaxations of envy-freeness.
\subsection{Previous Work}
\paragraph{Envy-freeness up to one item (EF1):}
Budish~\cite{budish2011combinatorial} introduced the notion of EF1, where no agent $i$ envies another agent $j$ after the removal of {\it some} item in $j$'s bundle.
That is, in an EF1 allocation, agent $i$ may envy agent $j$, but removing some item from $j$'s bundle would guarantee that $i$ does not envy $j$.
Note that no item is really removed from the envied agent's bundle, 
this is just a thought experiment to quantify the envy that the envious agent has toward the envied agent.
It is shown by Lipton et~al.~\cite{lipton2004approximately} that a complete EF1 allocation always exists and it can be obtained in polynomial time.

\paragraph{Envy-freeness up to any item (EFX):}
Plaut and Roughgarden~\cite{plaut2020almost} introduced the notion of EFX, where no agent $i$ envies another agent $j$ after the removal of {\it any} item in $j$'s bundle.
EFX is strictly stronger than EF1 but strictly weaker than full envy-freeness.
Caragiannis et al.~\cite{caragiannis2019envy} remarked that {\it ``Arguably, EFX is the best fairness analog of envy-freeness for indivisible items.''}
However, while it is known that EF1 allocations always exist, 
the existence of complete EFX allocations is not known except in limited cases.
As described in ~\cite{caragiannis2019unreasonable}, 
{\it ``Despite significant effort, we were not able to settle the question of whether an EFX allocation always exists (assuming all goods must be allocated), and leave it as an enigmatic open question.''}

Plaut and Roughgarden~\cite{plaut2020almost} showed that complete EFX allocation always exists (i) when there are only two agents or (ii) when all agents have identical valuations.
Furthermore, it was shown in~\cite{plaut2020almost} that exponentially many value queries may be required to identify EFX allocations even in the case where there are only two agents with identical {\it submodular} valuation functions.

Recently, Chaudhury et al.~\cite{chaudhury2020efx} showed that an EFX allocation always exists for three agents with {\it additive} valuations, that is, $v_i(S) = \sum_{g \in S} v_i(\{g\})$ for any agent $i$ and any subset $S \subseteq M$.
It is not known whether a complete EFX allocation always exists when there are more than three agents with additive valuations. 

As other recent results, Amanatidis et al.~\cite{amanatidis2020maximum} showed that a complete EFX allocation always exists when their valuations are additive and there are at most two possible values for the items.

\subsection{Our Contribution}
Our contribution in this paper is to show that a complete EFX allocation always exists when each agent is of one of two given types, where agents of the same type have identical additive valuation functions.
That is, we show the existence of EFX when the set of all agents $N$ is divided into $N_{\a}$ and $N_{\b}$, and each agent who belongs to $N_{\a}$ has additive valuation $v_{\a}$ and each agent who belongs to $N_{\b}$ has additive valuation $v_{\b}$. 
\begin{theorem}\label{thm: main}
Complete EFX allocations always exist when each agent is of one of two given types, where agents of the same type have identical additive valuation functions.
\end{theorem}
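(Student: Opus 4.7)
The strategy I would pursue is the one signposted in the abstract: start from the trivial partial EFX allocation (every bundle empty, all items in the pool), and iteratively replace it by a Pareto-dominating partial EFX allocation until no item remains unallocated. Formally, I would call an assignment of pairwise disjoint subsets $A_1,\ldots,A_{|N_\a|}$ to the $\a$-agents and $B_1,\ldots,B_{|N_\b|}$ to the $\b$-agents a \emph{partial EFX allocation} if the EFX condition holds for every agent with respect to all $|N_\a|+|N_\b|$ bundles, and write $P:=M\setminus(\bigcup_i A_i\cup\bigcup_j B_j)$ for the pool. Theorem~\ref{thm: main} then reduces to the following \emph{improvement lemma}: whenever $P\neq\emptyset$, there exists a partial EFX allocation that Pareto dominates the current one (every $\a$-agent weakly gains in $v_\a$, every $\b$-agent weakly gains in $v_\b$, and at least one strictly gains). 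Since the total welfare is bounded above by $v_\a(M)+v_\b(M)$ and strictly increases at each application, iterating the lemma terminates in a complete EFX allocation.

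To establish the improvement lemma, I would fix an item $g\in P$ and attempt to place it. Because agents of the same type have identical valuations, I may reindex so that $v_\a(A_1)\le\cdots\le v_\a(A_{|N_\a|})$ and $v_\b(B_1)\le\cdots\le v_\b(B_{|N_\b|})$, which eliminates within-type envy bookkeeping. The easy case is when $g$ can be placed in some bundle without violating any agent's EFX constraint; this is an immediate strict improvement. Otherwise I would look for a \emph{swap}: give $g$ to the owner of some bundle $X$ and return an item $h\in X$ to the pool, producing a new partial EFX allocation whose welfare exceeds the original's. To locate a viable swap I would introduce the notion of a \emph{type-$\a$ champion} for $g$, i.e., a bundle $A_i$ for which $(A_i\cup\{g\})\setminus\{h\}$ satisfies the EFX condition from the $\a$-agents' viewpoint for a suitable $h$, with type-$\b$ champions defined symmetrically. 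A case analysis on the two sorted chains of bundles should show that, in every configuration, some champion exists: if none did, then every bundle would simultaneously be ``too valuable'' from its own type's perspective and ``too cheap'' from the other's, which contradicts the EFX-ness of $\mathcal A$.

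The main obstacle I anticipate is preserving EFX across both types simultaneously during a swap. Placing $g$ into an $\a$-champion bundle may enlarge that bundle enough (under $v_\b$) that a $\b$-agent now EFX-envies it, and symmetrically on the other side; the swap must therefore be chained: the problematic item is compensated by a further exchange with a $\b$-bundle, which in turn may need compensation, and so on. My plan is to encode these swap propagations as edges in a bipartite graph whose two sides are the $\a$-bundles and the $\b$-bundles, and to prove that (i) along any legitimate chain a strictly lexicographic potential—say, the sorted vector of bundle valuations tie-broken by the number of allocated items—strictly increases, and (ii) the bipartite structure guarantees that chains terminate in a bounded number of steps, returning the allocation to a valid partial EFX allocation Pareto dominating $\mathcal A$. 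Establishing (ii) cleanly is, I expect, the delicate combinatorial heart of the proof, and it is precisely here that the two-type restriction is crucial: with only two valuation functions the compensating swaps alternate between an $\a$-bundle and a $\b$-bundle, whereas with three or more distinct valuations the chain could branch and cycle uncontrollably.
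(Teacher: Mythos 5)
Your high-level framework is exactly the paper's: maintain a partial EFX allocation, prove an improvement lemma producing a Pareto-dominating partial EFX allocation whenever an item is unallocated, and terminate via the welfare potential $\sum_i v_i(X_i)$ (note that boundedness alone does not give termination; you need, as the paper observes, that there are only finitely many allocations and hence finitely many potential values). The reduction of Theorem~\ref{thm: main} to this improvement lemma is correct and is precisely Theorem~\ref{thm: potential}.

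However, the proof of the improvement lemma itself --- which is the entire substance of the paper --- is not carried out. Your sketch rests on two unproved claims: (i) that a ``champion'' bundle in your sense always exists (the argument ``if none did, every bundle would be too valuable and too cheap simultaneously'' is not a proof and it is not clear it can be made into one), and (ii) that the chain of compensating swaps through your bipartite graph terminates in a valid Pareto-dominating EFX allocation; you yourself flag (ii) as the ``delicate combinatorial heart'' and leave it open. This is a genuine gap, and the paper's resolution is structurally different: no chain propagation occurs at all. After dispatching the easy cases (an envy cycle, a self-champion, or a unique source in the envy graph, each handled by known lemmas), the paper isolates the two \emph{minimum} bundles $X_{\alpha_0}$ and $X_{\beta_0}$, one per type, shows via the most-envious-agent machinery that each of $\alpha_0,\beta_0$ champions the other (Lemma~\ref{ob: championcycle}), and performs a single simultaneous exchange of the minimum preferred sets $P_X(\alpha_0, X_{\beta_0}\cup g)$ and $P_X(\beta_0, X_{\alpha_0}\cup g)$ between just these two agents; the item $g$ lies in both sets and so remains unallocated. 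A second stage then \emph{shrinks} $X'_{\alpha_0}$ and $X'_{\beta_0}$ to the minimum preferred sets of $\alpha_1$ and $\beta_1$ when needed, returning the surplus to the pool, which kills any newly created EFX envy while preserving the strict gain for $\alpha_0$ and $\beta_0$. Only two bundles ever change, so the verification reduces to six envy relations (Lemma~\ref{six}); the cascading interactions you anticipated never arise. To complete your proof you would need to either supply the termination argument for your swap chains or adopt something like the paper's two-agent exchange-and-trim construction.
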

As mentioned above, Plaut and Roughgarden~\cite{plaut2020almost} showed that complete EFX allocations always exist when all agents have identical valuations.
Under the assumption that each valuation is additive, Theorem~\ref{thm: main} extends this result in the sense that the number of valuation types becomes two.

\subsection{Related Work}
Recently Caragiannis et al.~\cite{caragiannis2019envy} introduced a relaxation of EFX called {\it EFX with charity}.
This is a partial EFX allocation, where all items need not be allocated among the agents.
Thus some items may be left unallocated.
 In the case of additive valuations, Caragiannis et al.~\cite{caragiannis2019envy} showed that there always exists a partial EFX allocation where every agent receives at least half the value of her bundle in an optimal {\it Nash welfare} allocation\footnote{This is an allocation that maximizes $\Pi_{i = 1}^n v_i(X_i)$.}.
 In the case of general valuations, Chaudhury et al.~\cite{chaudhury2020little} showed how to find a partial EFX allocation and a pool of unallocated items $P$ such that no one envies the pool and the cardinality of $P$ is less than the number of agents.

Whereas fair division of divisible resources is a classical topic starting from the 1940's~\cite{Steinhaus},
fair division of indivisible items has been actively studied in recent years.
There are several studies on EF1 and EFX which are the relaxations of {\it envy-freeness}~\cite{amanatidis2020maximum, caragiannis2019unreasonable, barman2018finding, plaut2020almost, bilo2018almost, caragiannis2019envy, chaudhury2020little, chaudhury2020efx}.
Another major concept of fairness is {\it maximin share} (MMS), which was introduced by Budish~\cite{budish2011combinatorial}.
It was shown in~\cite{kurokawa2018fair} that MMS allocations do not always exist, and 
there have been several studies on approximate MMS allocations~\cite{budish2011combinatorial, bouveret2016characterizing, amanatidis2017approximation, barman2017approximation, kurokawa2018fair, ghodsi2018fair, garg2018approximating}.
In addition, studies finding {\it efficient} fair allocations have attracted attention.
{\it Pareto-optimality} is a major notion of efficiency.
Caragiannis et al.~\cite{caragiannis2019unreasonable} showed that any allocation that has maximum Nash welfare is guaranteed to be Pareto-optimal and EF1.
Unfortunately, finding an allocation with the maximum Nash social welfare is APX-hard~\cite{lee2017apx}.
There are several studies on approximation algorithms for maximizing Nash social welfare~\cite{cole2018approximating, cole2017convex, chaudhury2018fair, anari2017nash, garg2018approximating, anari2018nash}.

There are many real-world scenarios where items or resources need to be divided fairly, e.g., taxi fare division, rent division, task distribution, and so on.
Spliddit (www.spliddit.org) is a fair division website, which offers a fair solution for the division of rent, goods, and credit \cite{goldman2015spliddit}.
This website implements mechanisms for users to log in, define what is to be divided, enter their valuations, and demonstrate fair division.
Since its launch in 2014, there have been several thousands of users~\cite{caragiannis2019unreasonable}.
For more details on Spliddit, we refer to the reader to~\cite{goldman2015spliddit, plaut2020almost}.
Another fair division application is {\it Course Allocation} used at the Wharton School at the University of Pennsylvania to fairly allocate courses among students~\cite{plaut2020almost, budish2017course}.

\section{Preliminaries}
\label{sec: pre}

Let $N$ be a set of $n$ agents and $M$ be a set of $m$ items.
In this paper, we assume that items are indivisible: an item may not be split among multiple agents.
Each agent $i \in N $ has a valuation function $v_i : 2^M \rightarrow \mathbb{R}_{\ge 0}$.
We assume that (i) any valuation function $v_i$ is {\it normalized}: $v_i(\emptyset) = 0$ and (ii) it is {\it monotone}:  $S\subseteq T$ implies $v_i(S) \le v_i(T)$ for any $S,T \subseteq M$.
We say that a valuation function $v_i$ is {\it additive} if $v_i(S) = \sum_{g \in S} v_i(\{g\})$ for any $S\subseteq M$.

Throughout this paper, we assume that any valuation function is additive.
Furthermore, in our proof of Theorem~\ref{thm: main}, we consider the situation where there are only two types of valuation functions $v_\a$ and $v_\b$.
That is, for each agent $i$, either $v_i=v_{\a}$ or $v_i = v_{\b}$.
Let $N_\a$ be the set of agents whose valuation functions are $v_\a$ and $N_\b$ be the set of agents whose valuation functions are $v_\b$.
To simplify notation, we denote $v_i(g)$ instead of $v_i(\{g\})$ for $g\in M$ and use $A\setminus g, A\cup g$ instead of $A\setminus \{g\}, A\cup \{g\}$, respectively.
We also denote $S \ga T$ instead of $v_\a (S) > v_\a (T)$. 
In a similar way, we use the symbols $\la, \ge_\a, \le_\a, \gb, \lb, \ge_\b$, and $\le_\b$.

For $M' \subseteq M$, an {\it allocation} $A=(A_1,A_2,\ldots, A_n)$ on $M'$ is a partition of $M'$ into $n$ disjoint subsets, where $A_i$ is the {\it bundle} given to agent $i$. 
Under this allocation, the {\it utility} to agent $i$ is $v_i(A_i)$ (agent $i$'s value for the set of items $i$ receives.)
We say that an allocation $A=(A_1,A_2,\ldots, A_n)$ on $M'$ is {\it complete} if $M' = M$.
Otherwise, we say that an allocation $A$ is {\it partial}.

Several concepts of fairness have been considered, and one of the most well-studied notions of fairness is {\it envy-freeness}.
An allocation $A$ is {\it envy-free} if for all $i,j \in N$, 
$v_i(A_i) \ge v_i(A_j). $
We say that $i$ {\it envies} $j$ if $v_i(A_i) < v_i(A_j)$.
Unfortunately, complete envy-free allocations do not always exist when items are indivisible.
Thus the relaxations of envy-freeness have been considered.
An allocation $A$ is {\it EF1} if for all $i,j \in N$ where $i$ envies $j$, 
$\exists g \in A_j ~{\rm s.t.}~ v_i(A_i) \ge v_i(A_j\setminus g). $
A complete EF1 allocation always exists even when valuation functions are not additive and it can be obtained in polynomial time \cite{lipton2004approximately}.
However, the notion of EF1 seems to be too weak compared to envy-freeness.
Therefore, a stronger concept of fairness is desirable.
\begin{definition}
An allocation $A$ is {\it EFX} if for all $i,j \in N$, 
$$\forall g \in A_j ~{\rm s.t.}~ v_i(A_i) \ge v_i(A_j\setminus g). $$
\end{definition}
That is, in an EFX allocation, agent $i$ may envy agent $j$, but removing {\it any} item from $j$'s bundle would guarantee that $i$ does not envy $j$.
We say that $i$ {\it EFX~envies} $j$ if $\exists g \in A_j  ~{\rm s.t.}~ v_i(A_i) < v_i(A_j\setminus g)$.
We next define the standard notion of Pareto domination.
\begin{definition}(Pareto domination) For an allocation $A=(A_1,A_2,\ldots, A_n)$,  another allocation $B=(B_1,B_2,\ldots, B_n)$ \emph{ Pareto~dominates} $A$ if 
\begin{align*}
&\forall i \in N, v_i(B_i) \ge v_i(A_i), ~~and \\
&\exists j \in N, v_j(B_j) > v_j(A_j). \\
\end{align*}

\end{definition}
The existing algorithms for finding an EFX allocation with charity~\cite{chaudhury2020little} or a $\frac{1}{2}$EFX allocation\footnote{An allocation $A$ is $\frac{1}{2}$EFX if for all $i,j \in N$ and any $g \in A_j, v_i(A_i) \ge \frac{1}{2}v_i(A_j\setminus g). $}~\cite{plaut2020almost} iteratively construct a Pareto dominating (partial) EFX allocation from an existing partial EFX allocation.
We will take the same approach to prove Theorem~\ref{thm: main}.

In the following, we explain some technical ideas and notions used in the earlier papers.
In what follows in this section, we assume that every agent's valuation function is additive, but we do not need the assumption that there are only two types of valuation functions.
\paragraph{Non-degenerate instances:}  An instance $I$ is a triple $\langle N, M, \mathcal{V} \rangle $, where $N$ is a set of agents, $M$ is a set of items and $\mathcal{V}=\{v_1,\dots,v_n\}$ is a set of valuation functions.
We use an assumption on instances considered in \cite{chaudhury2020efx}.

\begin{definition} An instance $I$ is {\rm non-degenerate} if for any $i \in N$ and $S,T \subseteq M$, 
$$S\neq T \Rightarrow v_i(S) \neq v_i(T).$$
\end{definition}

Let $M=\{g_1,\dots,g_m\}$ and let $\epsilon > 0$ be a positive real number.
We perturb an instance $I$ to $I_\epsilon =\langle N, M, \mathcal{V}_\epsilon \rangle $, where for any $v_i \in \mathcal{V}$ we define $v'_i \in \mathcal{V}_\epsilon$ by $v'_i(g_j) = v_i(g_j) + \epsilon \cdot 2^j$.

\begin{lemma}(Chaudhury et al. \cite{chaudhury2020efx})
Let $\delta = \min_{i \in N} \min_{S,T: v_i(S) \neq v_i(T)} |v_i(S)-v_i(T)|$ and let $\epsilon > 0$ such that $\epsilon\cdot 2^{m+1} < \delta$. Then the following three statements hold.
\begin{itemize}
\item For any  $i\in N$ and $S,T \subseteq M$, $v_i(S) > v_i(T)$ implies $v'_i(S) > v'_i(T)$.
\item $I_\epsilon$ is a non-degenerate instance. 
\item If $X$ is an EFX allocation for $I_\epsilon$ then $X$ is also an EFX allocation for $I$.

\end{itemize}
\end{lemma}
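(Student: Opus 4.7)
The plan is to exploit additivity to write $v'_i(S) = v_i(S) + \epsilon\cdot f(S)$ where $f(S) = \sum_{g_j \in S} 2^j$ for every $S \subseteq M$, and then observe that $f$ takes distinct values on distinct subsets (uniqueness of binary representation) while the total perturbation $\epsilon \cdot f(S)$ is controlled: for any $S$, $f(S) \le 2 + 4 + \cdots + 2^m = 2^{m+1}-2 < 2^{m+1}$, so $|\epsilon\cdot f(S) - \epsilon\cdot f(T)| < \epsilon \cdot 2^{m+1} < \delta$.

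For the first statement, I would fix $i$ and $S,T$ with $v_i(S) > v_i(T)$. By definition of $\delta$ we have $v_i(S) - v_i(T) \ge \delta$, while the additive shift changes this difference by $\epsilon(f(S) - f(T))$, whose absolute value is strictly less than $\delta$ by the bound above. Hence $v'_i(S) - v'_i(T) > 0$. For the second statement, let $S \neq T$. If $v_i(S) \neq v_i(T)$, then one of $v_i(S) > v_i(T)$ or $v_i(T) > v_i(S)$ holds and the first statement yields $v'_i(S) \neq v'_i(T)$. If instead $v_i(S) = v_i(T)$, then $v'_i(S) - v'_i(T) = \epsilon(f(S) - f(T))$, which is nonzero because $f$ is injective on $2^M$ (the sets $S,T$ correspond to distinct binary strings of length $m$).

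For the third statement, I would argue by contradiction. Suppose $X$ is EFX for $I_\epsilon$ but not for $I$: then some agent $i$ and some $g \in X_j$ witness $v_i(X_i) < v_i(X_j \setminus g)$. Applying the first statement to $S = X_j \setminus g$ and $T = X_i$ upgrades this strict inequality to $v'_i(X_i) < v'_i(X_j \setminus g)$, contradicting the EFX property of $X$ in $I_\epsilon$.

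There is no real obstacle here; the proof is essentially bookkeeping. The only substantive observation is the choice of the coefficients $2^j$: they are large enough to force injectivity on subsets but, together with the condition $\epsilon\cdot 2^{m+1} < \delta$, small enough not to overturn any strict inequality that already holds under $v_i$. Additivity is used only to ensure that the perturbation on items extends consistently to a perturbation on subsets.
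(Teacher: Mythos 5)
Your proposal is correct and follows essentially the same route as the paper's proof: bound the perturbation $\epsilon\bigl(\sum_{g_j\in S}2^j-\sum_{g_j\in T}2^j\bigr)$ by $\epsilon\cdot 2^{m+1}<\delta$ for the first claim, invoke uniqueness of binary representations for non-degeneracy, and derive the third claim from the first by contraposition/contradiction. No gaps.
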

This lemma shows that it suffices to deal with non-degenerate instances to prove the existence of an EFX allocation.
In what follows, we only deal with non-degenerate instances.

\paragraph{Minimum preferred set and Most envious agent:} 
We use the notion of most envious agent introduced in \cite{chaudhury2020little}.
Consider an allocation $X=(X_1,X_2,\ldots, X_n)$ and a set $S \subseteq M$.
For an agent $i$ such that $S >_i X_i$, we define a {\it minimum preferred set} $P_X(i,S)$ of $S$ for agent $i$ with respect to allocation $X$ as a smallest cardinality subset $S'$ of $S$  such that $S' >_i X_i$. 
Note that it holds that $P_X(i,S) >_i X_i$ and $X_i \ge_i Z$ for any subset $Z \subseteq S$ of size at most $|P_X(i,S)|-1$ by the definition of the minimum preferred set.
Define $\kappa _X(i,S)$ by
\begin{eqnarray*}
  \kappa _X(i,S)=
  \left\{
    \begin{array}{ll}
      |P_X(i,S)| & {\rm if}~S >_i X_i,\\
      +\infty & {\rm otherwise}.
    \end{array}
  \right.
\end{eqnarray*}
Let $\kappa_X(S) = \min _{i \in N}  \kappa _X(i,S)$.
We define $A_X(S)$ for a set $S$ as the set of agents with the smallest values of $\kappa _X(i,S)$, i.e., 
$$A_X(S) = \{ i\in N \mid S >_i X_i ~{\rm and}~ \kappa _X(i,S)= \kappa_X(S) \}.$$
We call $A_X(S)$ the set of {\it most envious agents}.
\paragraph{Champions:}
The notion of Champions was introduced in \cite{chaudhury2020efx}.
Let $X$ be an allocation on $M'\subseteq M$ and let $g \in M\setminus M'$ be an unallocated item.
We say that $i$ {\it champions} $j$ if $i$ is a most envious agent for $X_j\cup g$, i.e., $i \in A_X(X_j \cup g)$.
When it holds that $i \in A_X(X_i \cup g)$, we call $i$ a {\it self-champion}.
Note that $A_X(X_i \cup g) \neq \emptyset$ for any agent $i$ since we have $X_i \cup g >_i X_i$ in a non-degenerate instance.

\section{Existence of EFX for Two Additive Valuations}
\label{sec: discussion}
In this section, we prove Theorem~\ref{thm: main}.
As mentioned in Section~\ref{sec: pre}, our algorithm constructs a sequence of (partial) EFX allocations in which each allocation Pareto dominates its predecessor.
From now on we consider the case where every agent has one of the two additive  valuations $v_{\a}$ and $v_{\b}$.
We can see that if $|N_{\a}| \le 1$ or $|N_{\b}|\le 1$, then Theorem~\ref{thm: main} holds:
if $|N_{\a}| = 0$ or $|N_{\b}| = 0$, we obtain an EFX allocation in the way shown in~\cite{plaut2020almost}.
In addition, if $|N_{\a}| =1$ or $|N_{\b}|= 1$ (say $|N_{\a}| =1$), we first obtain an EFX allocation $X=(X_1,\dots, X_n)$ under the condition that all agents have identical valuations $v_{\b}$.
Then, we allocate the bundle with the highest utility in the valuation $v_{\a}$ among $X_1,\dots, X_n$ to the agent in $N_{\a}$ and allocate the rest of bundles to the other agents arbitrarily.
Obviously, this allocation becomes EFX.
Thus, we assume that $|N_{\a}| \ge 2$ and $|N_{\b}|\ge 2$.
To show our main results, we will prove the following theorem.
\begin{theorem}\label{thm: potential}
Let $X$ be an EFX allocation on $M' \subseteq M$ and let $g \in M\setminus M'$ be an unallocated item.
Then, there exists an allocation $Y$ on $S \subseteq M' \cup g$ such that
\begin{itemize}
\item $Y$ is EFX, and
\item $Y$ Pareto dominates $X$.
\end{itemize}
\end{theorem}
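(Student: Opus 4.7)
The plan is to construct $Y$ by a case analysis on the ``champion'' structure of the unallocated item $g$, heavily exploiting the two-type hypothesis. After reducing to a non-degenerate instance via the perturbation lemma, I would let $a^* \in N_\a$ denote the agent minimizing $v_\a(X_i)$ over $N_\a$, and analogously $b^* \in N_\b$ minimizing $v_\b(X_i)$ over $N_\b$; both are unique by non-degeneracy. The key structural consequence of having only two valuation types is that for any $S \subseteq M$, the agent $a^*$ is at least as envious of $S$ as any other $\a$-agent (since all $\a$-agents value $S$ identically and $X_{a^*}$ is the smallest-$v_\a$ own bundle), and symmetrically for $b^*$; hence $A_X(S) \cap \{a^*, b^*\}$ is nonempty whenever $A_X(S)$ is.

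First I would dispose of the \emph{self-champion} case: if $a^* \in A_X(X_{a^*} \cup g)$, I set $Y_{a^*} := P_X(a^*, X_{a^*} \cup g)$, keep $Y_k := X_k$ for $k \ne a^*$, and leave the residual items $(X_{a^*} \cup g) \setminus Y_{a^*}$ unallocated. Then $a^*$ strictly improves so Pareto-dominance is immediate, and a standard ``champion lemma'' (if $i$ champions $j$ then $P_X(i, X_j \cup g) \setminus h \le_k X_k$ for any $h$ in the preferred set and any $k$) combined with EFX of $X$ yields EFX of $Y$. The case $b^* \in A_X(X_{b^*} \cup g)$ is symmetric.

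Assuming neither $a^*$ nor $b^*$ self-champions, the key observation forces $b^* \in A_X(X_{a^*} \cup g)$ and $a^* \in A_X(X_{b^*} \cup g)$. Write $P_a := P_X(b^*, X_{a^*} \cup g)$ and $P_b := P_X(a^*, X_{b^*} \cup g)$. Using that $b^*$ does not EFX-envy $a^*$ in $X$, a one-line argument shows that $g \notin P_a$ implies $P_a = X_{a^*}$, and symmetrically for $P_b$. I would then split into three subcases: (a) $g \notin P_a$ and $g \notin P_b$, where I simply swap $Y_{a^*} := X_{b^*}$ and $Y_{b^*} := X_{a^*}$; (b) $g$ belongs to exactly one of $P_a, P_b$ (say $g \in P_a$, so $P_b = X_{b^*}$), where I set $Y_{a^*} := X_{b^*}$ and $Y_{b^*} := P_a$ with $X_{a^*} \setminus (P_a \setminus g)$ unallocated. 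In both subcases, Pareto-dominance is clear (both $a^*$ and $b^*$ strictly improve, using $P_a >_\b X_{b^*}$ and $P_b >_\a X_{a^*}$), and EFX follows by applying the champion lemma to $P_a$ (and to $P_b$ when relevant) together with EFX of $X$.

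The main obstacle I anticipate is the remaining subcase (c), $g \in P_a \cap P_b$: both ``poorest'' agents require $g$ to form a strictly-preferred bundle, yet $g$ can only go to one. A natural attempt is $Y_{a^*} := P_b$ and $Y_{b^*} := X_{a^*} \cup (X_{b^*} \setminus (P_b \setminus g))$, leaving any leftovers unallocated; but Pareto-dominance for $b^*$ then requires $v_\b(X_{a^*}) \ge v_\b(P_b \setminus g)$, which is not forced by the preceding facts. Resolving this subcase is where I expect the proof to spend most of its effort---likely via a finer sub-analysis that chooses which of $a^*, b^*$ receives $g$ (possibly depending on the relative magnitudes among $v_\a(X_{a^*}), v_\a(X_{b^*}), v_\b(X_{a^*}), v_\b(X_{b^*})$), exploits the minimality properties of $P_a$ and $P_b$, and then verifies that the resulting combined bundle simultaneously Pareto-dominates $X$ and preserves EFX via the champion lemma.
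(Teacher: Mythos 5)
Your reduction to the hard case is essentially the paper's: your $a^*, b^*$ are the paper's $\a_0, \b_0$, your ``key structural consequence'' is exactly the paper's observation that every nonempty set of most envious agents meets $\{\a_0,\b_0\}$, and your subcases (a) and (b) are instances of the known cycle-elimination and champion-reachability lemmas. But the theorem is not proved, because your subcase (c) --- which, after the preprocessing, is in fact the \emph{only} substantive case (once neither poorest agent envies the other, no subset of $X_{\b_0}$ alone can exceed $X_{\a_0}$ in $v_\a$, so $g$ necessarily lies in both minimum preferred sets) --- is left open, and the direction you sketch for it does not work. As you note yourself, any construction that hands $g$ to one of the two agents forces an inequality such as $v_\b(X_{a^*}) \ge v_\b(P_b \setminus g)$ that nothing in the setup guarantees.

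The missing idea is that $g$ need not be allocated at all in this round. The paper has $\a_0$ and $\b_0$ exchange only the portions $P_X(\a_0, X_{\b_0}\cup g)\setminus g$ and $P_X(\b_0, X_{\a_0}\cup g)\setminus g$, leaving $g$ unallocated. Since $\a_0$ is not a self-champion while $\b_0$ champions $\a_0$, every subset of $X_{\a_0}\cup g$ of cardinality $|P_X(\b_0, X_{\a_0}\cup g)|$ is worth less than $X_{\a_0}$ to $\a_0$; hence $P_X(\a_0,X_{\b_0}\cup g) \ga X_{\a_0} \ga P_X(\b_0, X_{\a_0}\cup g)$, and subtracting $v_\a(g)$ from the outer terms shows $\a_0$ gains strictly more than she gives up (symmetrically for $\b_0$). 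What drives termination is the strict increase of the potential $\sum_i v_i(X_i)$, not the placement of $g$ itself. A second step is then still required, which your sketch omits for this case: the exchanged bundles are neither old bundles nor minimum preferred sets, so $\a_1$ (resp.\ $\b_1$) may now EFX envy $\a_0$ (resp.\ $\b_0$); the paper trims $X'_{\a_0}$ to $P_{X'}(\a_1, X'_{\a_0})$ when such envy arises, and checks the six remaining envy relations. Without this exchange-and-discard step and the subsequent trimming, the core of the argument is absent.
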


Assuming that Theorem~\ref{thm: potential} holds,  it is easy to show Theorem~\ref{thm: main}.
Indeed, if $Y$ Pareto dominates $X$, then we have $\sum_{i\in N} v_i(X_i) < \sum_{i\in N} v_i(Y_i)$.
In other words, we can strictly improve the value of potential function $\phi(X)= \sum_{i\in N} v_i(X_i)$ while keeping EFX.
Furthermore, since there are only a finite number of allocations, there are also only a finite number of values of potential functions.
Thus this improvement terminates in a finite number of steps.
Since we can strictly improve the value of the potential function if $M' \subsetneq M$, we  finally obtain an EFX allocation on $M$.

Now our goal is to show that Theorem~\ref{thm: potential} holds.
If there is an agent $i$ such that allocating $g$ to $i$ results in an EFX allocation, 
then we can obtain $Y$ desired in Theorem~\ref{thm: potential}.
For a few special cases, it is shown in earlier papers how to obtain a Pareto dominating EFX allocation from an existing EFX allocation.
For an allocation $X$, we define the {\it envy-graph} $E_X$, where the vertices correspond to agents and there is an edge from $i$ to $j$ if $i$ envies $j$, i.e., $X_i <_i X_j$.
The envy-graph was introduced in \cite{lipton2004approximately} and the following lemmas are known.
\begin{lemma}(Lipton et~al.~\cite{lipton2004approximately})\label{fact: cyclic}
Let $X$ be an EFX allocation such that $E_X$ has a dicycle. Then, there exists another EFX allocation $Y$ such that 
$E_Y$ is acyclic and $Y$ Pareto dominates $X$.
\end{lemma}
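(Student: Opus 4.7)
The plan is to use the standard cycle-elimination technique from Lipton et~al.: repeatedly rotate bundles along a directed cycle in the envy-graph. The key observation is that rotating preserves the multiset of bundles, so EFX is maintained, while each agent on the cycle strictly improves, giving Pareto domination.

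First I would fix a directed cycle $C: i_1 \to i_2 \to \cdots \to i_k \to i_1$ in $E_X$, so by definition $X_{i_j} <_{i_j} X_{i_{j+1}}$ for every $j$ (indices mod $k$). Define a new allocation $Y$ by setting $Y_{i_j} = X_{i_{j+1}}$ for $j = 1, \ldots, k$ and $Y_\ell = X_\ell$ for every agent $\ell$ not on the cycle. Pareto domination is then immediate: every agent off the cycle keeps the same bundle, and every agent $i_j$ on the cycle receives a bundle they strictly envied, so $v_{i_j}(Y_{i_j}) = v_{i_j}(X_{i_{j+1}}) > v_{i_j}(X_{i_j})$.

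Next I would verify that $Y$ is EFX. The crucial point is that $\{Y_1,\ldots,Y_n\} = \{X_1,\ldots,X_n\}$ as a multiset of bundles: the rotation just permutes the same $n$ bundles among the agents. So fix any $i, j \in N$ and any $g \in Y_j$. There is some agent $k$ (namely the one who originally held $Y_j$ in $X$) with $Y_j = X_k$. Since $X$ was EFX, we have $v_i(X_k \setminus g) \le v_i(X_i)$, and since $Y$ Pareto dominates $X$ we have $v_i(X_i) \le v_i(Y_i)$, giving $v_i(Y_j \setminus g) \le v_i(Y_i)$ as required.

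Finally, to get an allocation whose envy-graph is acyclic I would iterate the rotation. After each rotation the potential $\phi(X) = \sum_{i\in N} v_i(X_i)$ strictly increases (by the strict improvement on the cycle), and only finitely many allocations of a fixed item set exist, so this process terminates; the terminal allocation $Y$ is EFX, Pareto dominates the original $X$, and has no directed cycle in $E_Y$. The only subtlety worth stating carefully is the EFX preservation argument above — everything else is book-keeping — and it is not really an obstacle because it relies on nothing more than the fact that the rotation is a permutation of the bundles.
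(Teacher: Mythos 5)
Your proof is correct and is exactly the standard cycle-rotation argument of Lipton et~al.\ that the paper cites for this lemma: rotating bundles along a cycle permutes the multiset of bundles (preserving EFX) while strictly improving every agent on the cycle, and iterating with the potential $\sum_{i} v_i(X_i)$ yields termination at an acyclic envy-graph. No gaps; the EFX-preservation step and the transitivity of Pareto domination across iterations are both handled properly.
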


\begin{lemma}(Chaudhury et~al.~\cite{chaudhury2020little}) \label{ob: onesource}
Let $X$ be an EFX allocation on $M' \subseteq M$ and let $g \in M\setminus M'$ be an unallocated item.
If agent $i$ champions $j$ and $i$ is reachable from $j$ in $E_X$ (possibly $i=j$), 
then there is an EFX allocation $Y$ Pareto dominating $X$.
\end{lemma}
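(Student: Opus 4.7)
The plan is to construct $Y$ explicitly by shifting bundles backwards along the envy-path from $j$ to $i$ and handing $i$ a minimum preferred subset of $X_j\cup g$, discarding whatever is left over. Fix any directed path $j=j_0\to j_1\to\cdots\to j_k=i$ in $E_X$; the case $i=j$ corresponds to $k=0$, i.e., $i$ is a self-champion. Let $P:=P_X(i,X_j\cup g)\subseteq X_j\cup g$ and define
\[
Y_{j_t}:=X_{j_{t+1}}\ (0\le t\le k-1),\qquad Y_i:=P,\qquad Y_\ell:=X_\ell\ \text{for}\ \ell\notin\{j_0,\ldots,j_k\},
\]
leaving $(X_j\cup g)\setminus P$ unallocated. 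When $k=0$ only the bundle of $i$ changes.

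First I would verify that $Y$ Pareto dominates $X$. For each $t<k$, agent $j_t$ envies $j_{t+1}$ in $X$, so $v_{j_t}(Y_{j_t})=v_{j_t}(X_{j_{t+1}})>v_{j_t}(X_{j_t})$. The definition of a minimum preferred set gives $v_i(Y_i)=v_i(P)>v_i(X_i)$. All other agents keep their bundle, so the domination is strict at $i$ and weak elsewhere. In particular $v_\ell(Y_\ell)\ge v_\ell(X_\ell)$ for every $\ell\in N$.

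Next I would verify that $Y$ is EFX. The bundles $Y_{j_t}=X_{j_{t+1}}$ already appear in $X$; since $X$ is EFX and $v_\ell(Y_\ell)\ge v_\ell(X_\ell)$ for every $\ell$, no agent EFX-envies any such bundle. The only genuinely new bundle is $Y_i=P$. For any agent $\ell$ and any $h\in P$, the set $P\setminus h$ has cardinality $\kappa_X(X_j\cup g)-1<\kappa_X(\ell,X_j\cup g)$, because $i$ attains the minimum of $\kappa_X(\cdot,X_j\cup g)$; by the definition of the minimum preferred set this forces $v_\ell(P\setminus h)\le v_\ell(X_\ell)\le v_\ell(Y_\ell)$, so $\ell$ does not EFX-envy $Y_i$. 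Finally, agent $i$ does not EFX-envy any $Y_\ell$ either: $v_i(Y_i)=v_i(P)>v_i(X_i)$, and EFX of $X$ gives $v_i(X_i)\ge v_i(Y_\ell\setminus h)$ for every $h$ since each $Y_\ell$ with $\ell\neq i$ is a bundle of $X$.

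The main obstacle is the EFX check for the freshly created bundle $Y_i=P$, since every other agent—including those shuffled along the path and $i$ itself—must fail to EFX-envy it. This is precisely the step that invokes the full strength of the ``most envious agent'' condition: because $P$ has the smallest cardinality attaining $S>_i X_i$ across all agents, removing any single item from $P$ yields a set that no other agent strictly prefers to their own $X_\ell$. Everything else is a short bookkeeping argument resting on EFX of $X$ together with the strict utility gains collected along the envy-path.
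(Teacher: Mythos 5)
Your proposal is correct and follows essentially the same route as the paper's own argument: shift bundles backwards along the envy-path, give $i$ the minimum preferred set $P_X(i,X_j\cup g)$, and use the fact that $|P\setminus h|=\kappa_X(X_j\cup g)-1<\kappa_X(\ell,X_j\cup g)$ for every agent $\ell$ to rule out EFX envy toward the one new bundle. The remaining checks (Pareto domination from the strict gains along the path, and no envy toward recycled bundles since every agent weakly improves while $X$ was EFX) match the paper's bookkeeping exactly.
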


Lemma~\ref{fact: cyclic} implies that if the envy-graph $E_X$ has no sources, then we can obtain $Y$ desired in Theorem~\ref{thm: potential}.
Here, a {\it source} is a vertex with no incoming edges.
In addition, if there exists a self-champion, we can obtain $Y$ desired in Theorem~\ref{thm: potential} by applying Lemma~\ref{ob: onesource} as $i=j$.
We also see that if the envy-graph $E_X$ has exactly one source, then we can obtain $Y$ desired in Theorem~\ref{thm: potential} under the assumption that $E_X$ is acyclic and for any agent $i$, there exists an agent $j$ such that $j$ EFX envies $i$ when we allocate $X_i \cup g$ to $i$.
To see this, let $s$ be the unique source in $E_X$.
Then, by our assumption, there exists an agent $j\neq s$ such that $j$ EFX envies $s$ when we allocate $X_s \cup g$ to $s$.
Since agents except $s$ are not sources in $E_X$ and $E_X$ is acyclic, 
$j$ is reachable from $s$ in $E_X$, which means that we can apply Lemma~\ref{ob: onesource} to obtain $Y$.
To summarize the discussion so far, the following lemma holds.
\begin{lemma}
Let $X$ be an EFX allocation on $M' \subseteq M$ and let $g \in M\setminus M'$ be an unallocated item.
Then, there is an EFX allocation $Y$ Pareto dominating $X$ in the following cases.
\begin{enumerate}
\item[$(1)$] There is an agent $i$ such that no one EFX envies $i$ to allocate $X_i \cup g$ to $i$.
\item[$(2)$] $E_X$ has a dicycle.
\item[$(3)$] There is a self-champion.
\item[$(4)$] $E_X$ has exactly one source and $(1)$ and $(2)$ do not hold.
\end{enumerate}
\end{lemma}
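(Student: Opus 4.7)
The plan is a straightforward case-by-case argument, in each case either constructing $Y$ directly or reducing to one of the two already-established tools, Lemma~\ref{fact: cyclic} and Lemma~\ref{ob: onesource}. I would handle the four cases in the order listed.

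For case $(1)$, I would simply let $Y$ be the allocation obtained from $X$ by giving $g$ to $i$, i.e.\ $Y_i=X_i\cup g$ and $Y_k=X_k$ for every $k\neq i$. By hypothesis no agent EFX envies $i$ in $Y$. Any envy between agents $k,l\neq i$ is inherited unchanged from $X$ and so is not EFX envy; and because $Y_i\ge_i X_i$ while each other bundle is unchanged, agent $i$ still satisfies her EFX guarantee toward everyone else. Non-degeneracy gives $v_i(Y_i)>v_i(X_i)$, so $Y$ Pareto dominates $X$. Case $(2)$ is exactly Lemma~\ref{fact: cyclic}. Case $(3)$ follows from Lemma~\ref{ob: onesource} by taking $i=j$ to be the self-champion; the required ``path from $j$ to $i$ in $E_X$'' is the zero-length walk, so the hypothesis of the lemma is met.

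The only case needing a short combinatorial observation is case $(4)$. Let $s$ be the unique source of $E_X$. Since case $(1)$ fails, it fails in particular at $s$: there is some agent $j$ who EFX envies $s$ when $X_s\cup g$ is assigned to $s$, which, after unpacking the definitions of $\kappa_X$ and $A_X$, means exactly that $j\in A_X(X_s\cup g)$, i.e.\ $j$ champions $s$. Since case $(3)$ fails we have $j\neq s$. Since case $(2)$ fails, $E_X$ is acyclic; and in any finite acyclic digraph with a unique source every vertex is reachable from that source (walking backward along in-edges from any vertex must terminate at a source, and $s$ is the only candidate). Therefore $j$ is reachable from $s$ in $E_X$, so Lemma~\ref{ob: onesource} applied to the pair $(i,j)\leftarrow(j,s)$ produces the desired EFX allocation $Y$ Pareto dominating $X$.

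There is no real obstacle here: the lemma is essentially a summary of the discussion preceding it. The only things I actually need to verify are the one-line check in case $(1)$ that enlarging $X_i$ by $g$ preserves the EFX guarantees between the unchanged pairs, and the DAG-reachability fact used in case $(4)$; everything else is a direct invocation of Lemmas~\ref{fact: cyclic} and~\ref{ob: onesource}.
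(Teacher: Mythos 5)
Your route is the same as the paper's: case $(1)$ by direct construction, case $(2)$ by citing Lemma~\ref{fact: cyclic}, case $(3)$ by applying Lemma~\ref{ob: onesource} with $i=j$, and case $(4)$ by locating a champion of the unique source $s$ and feeding it to Lemma~\ref{ob: onesource} via reachability in the acyclic envy-graph. Cases $(1)$--$(3)$ are fine, including your explicit EFX check in case $(1)$ and the zero-length-path reading of case $(3)$.

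The one step that is wrong as written is in case $(4)$: the claim that ``$j$ EFX envies $s$ when $X_s\cup g$ is assigned to $s$'' means \emph{exactly} that $j\in A_X(X_s\cup g)$. It does not. EFX-envying the augmented bundle only gives $\kappa_X(j,X_s\cup g)<+\infty$; membership in $A_X(X_s\cup g)$ additionally requires $j$ to attain the minimum $\kappa_X(X_s\cup g)=\min_{i\in N}\kappa_X(i,X_s\cup g)$, and the particular witness $j$ to the failure of $(1)$ at $s$ need not be a minimizer (another agent may reach a preferred subset of $X_s\cup g$ using fewer items). Since Lemma~\ref{ob: onesource} requires a champion, not merely an EFX-envier, you cannot apply it to your $j$. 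The repair is immediate and is what the paper implicitly does: $A_X(X_s\cup g)$ is nonempty (in fact always, since $X_s\cup g>_s X_s$ in a non-degenerate instance), so pick any $j^*\in A_X(X_s\cup g)$; if $j^*=s$ then $s$ is a self-champion and case $(3)$ applies, and otherwise $j^*\neq s$ is reachable from the unique source $s$ of the acyclic graph $E_X$ by your backward-walk argument, so Lemma~\ref{ob: onesource} applies to the pair $(j^*,s)$. A minor related point: the hypothesis of case $(4)$ does not formally include the failure of $(3)$, so your sentence ``since case $(3)$ fails we have $j\neq s$'' should be read as a harmless without-loss-of-generality step, the alternative being covered by case $(3)$ itself.
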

Therefore hereafter, we assume that {\it the envy-graph has more than one source}, {\it there are no self-champions}, and {\it for any agent $i$, there exists an agent $j$ such that $j$ EFX envies $i$ when we allocate $X_i \cup g$ to $i$}.
 For any agent $i \in N$, we have $X_i \neq \emptyset$ by our assumption. 
Thus, by the assumption of non-degeneracy, for any $ i,j \in N_\a$ with $i \neq j$, we have $X_i \la X_j$ or $X_i \ga X_j$. Similarly, for any $i,j \in N_\b$ with $i \neq j$, we have $X_i \lb X_j$ or $X_i \gb X_j$.
We denote $N_\a = \{\a_0, \a_1, \ldots , \a_s\}$ and $N_\b = \{\b_0, \b_1, \ldots , \b_t\}$, where $X_{\a_0} \la X_{\a_1} \la \cdots \la X_{\a_s}$
and $X_{\b_0} \lb X_{\b_1} \lb \cdots \lb X_{\b_t}$.
Recall that $s,t \ge 1$.
In the following, we only change the bundles of $\a_0$ and/or $\b_0$.
Hence, to claim that a new allocation Pareto dominates $X$, we only need to  check that both $\a_0$ and $\b_0$ are not worse off and at least one of them is strictly better off than in $X$.
Furthermore, the following lemma shows that if the new allocation Pareto dominates $X$, we only need to check that $\a_1$ and $\b_1$ do not EFX envy $\a_0$ and $\b_0$, and both $\a_0$ and $\b_0$ do not EFX envy each other to claim that the new allocation is EFX.
\begin{lemma}\label{six}
Suppose that $X'$ is a new allocation Pareto dominating $X$ obtained from $X$ by changing the bundles of $\a_0$ and/or $\b_0$. 
If $\a_1$ and $\b_1$ do not EFX envy $\a_0$ and $\b_0$, and both $\a_0$ and $\b_0$ do not EFX envy each other, then $X'$ is an EFX allocation.
\end{lemma}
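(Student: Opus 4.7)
The plan is to verify EFX for $X'$ by checking, for every ordered pair of distinct agents $(i,j)$, that $i$ does not EFX envy $j$ in $X'$. Since only the bundles of $\a_0$ and $\b_0$ are changed, for every $k \notin \{\a_0, \b_0\}$ we have $X'_k = X_k$, and Pareto domination gives $v_i(X'_i) \ge v_i(X_i)$ for all $i$. I would split the analysis into four cases depending on whether each of $i, j$ lies in $\{\a_0, \b_0\}$.

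In the easy cases, when neither $i$ nor $j$ belongs to $\{\a_0,\b_0\}$, the bundles $X'_i = X_i$ and $X'_j = X_j$ are unchanged, so the EFX property of $X$ transfers directly. When $i \in \{\a_0,\b_0\}$ and $j \notin \{\a_0,\b_0\}$, we have $X'_j = X_j$, so for any $g \in X'_j$, $v_i(X'_i) \ge v_i(X_i) \ge v_i(X_j \setminus g) = v_i(X'_j \setminus g)$, using Pareto domination and the EFX property of $X$. The case $i, j \in \{\a_0,\b_0\}$ is exactly the hypothesis on $\a_0$ and $\b_0$ not EFX-envying each other.

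The main obstacle is the remaining case, $i \notin \{\a_0,\b_0\}$ and $j \in \{\a_0,\b_0\}$, because the hypothesis only forbids EFX envy from $\a_1$ and $\b_1$ toward $\a_0, \b_0$. Here I would exploit the ordering $X_{\a_0} \la X_{\a_1} \la \cdots \la X_{\a_s}$ (and similarly for $N_\b$) combined with the additivity of $v_\a, v_\b$. For $i = \a_k$ with $k \ge 2$ and $j \in \{\a_0,\b_0\}$, we have $X'_{\a_k} = X_{\a_k} \ge_\a X_{\a_1} = X'_{\a_1}$, so $v_\a(X'_{\a_k}) \ge v_\a(X'_{\a_1})$; the hypothesis that $\a_1$ does not EFX envy $j$ in $X'$ then yields $v_\a(X'_{\a_1}) \ge v_\a(X'_j \setminus g)$ for every $g \in X'_j$, and chaining these gives the required inequality. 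The symmetric argument with $\b_1$ handles $i = \b_k$, $k \ge 2$. Finally, $i = \a_1$ and $i = \b_1$ are covered directly by the hypothesis.

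Once all four cases are dispatched, no agent EFX-envies any other in $X'$, so $X'$ is an EFX allocation. The whole argument is essentially a bookkeeping case analysis: the only nontrivial insight is that the ordering of bundle values in $X$ lets the non-envy condition on the single agents $\a_1, \b_1$ propagate to every agent in $N_\a \setminus \{\a_0\}$ and $N_\b \setminus \{\b_0\}$.
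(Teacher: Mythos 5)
Your proof is correct and follows essentially the same approach as the paper: unchanged pairs inherit EFX from $X$, envy from $\a_0,\b_0$ outward is handled by Pareto domination, and non-envy from $\a_1$ (resp.\ $\b_1$) toward $\a_0,\b_0$ propagates to all of $N_\a\setminus\a_0$ (resp.\ $N_\b\setminus\b_0$) via the ordering of bundle values and the identical valuations within each type. The only (harmless) difference is that you spell out the propagation step explicitly, whereas the paper states it in one line.
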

\begin{proof}
Since we do not change the bundles of any agents in $N\setminus \{\a_0, \b_0\}$ and $X$ is EFX, 
there is no EFX envy between them in the new allocation.
Thus, we only need to consider EFX envy toward $\a_0, \b_0$ or from $\a_0, \b_0$.
Since $\a_0$ is not worse off than in $X$ and $X$ is EFX, $\a_0$ does not EFX envy $N\setminus \{\a_0, \b_0\}$ in $X'$.
Similarly, $\b_0$ does not EFX envy $N\setminus \{\a_0, \b_0\}$ in $X'$.
Finally, if $\a_1$ does not EFX envy $\a_0$ and $\b_0$, then neither do any agents in $N_{\a}\setminus \a_0$. 
Similarly,  if $\b_1$ does not EFX envy $\a_0$ and $\b_0$, then neither do any agents in $N_{\b}\setminus \b_0$. 
\end{proof}

To sum up, if the new allocation Pareto dominates $X$, it is enough to check that there is no EFX envy for the six relationships to claim that  it is EFX.

Our discussion can be divided into two stages.
In the first stage, we strictly improve the utilities of $\a_0$ and $\b_0$ by exchanging a subset of each bundle.
Then, the new allocation will Pareto dominate the previous allocation.
However, this change may cause someone to EFX envy $\a_0$ or $\b_0$.
Thus, in the second stage, we reduce the bundles of $\a_0$ and $\b_0$ to eliminate such EFX envy.

\subsection{Improvement of the utilities of $\a_0$ and $\b_0$ }
In this subsection, we improve the utilities of $\a_0$ and $\b_0$.
For this purpose, we use the following lemmas that show the relationships between $\a_0$ and $\b_0$.
\begin{lemma}\label{ob: noenvy}
We have $X_{\a_0} \ga X_{\b_0}$ and $X_{\b_0} \gb X_{\a_0}$.
\end{lemma}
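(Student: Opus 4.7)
The plan is to read off the claim almost directly from the structure of the envy-graph $E_X$ together with the standing assumption that $E_X$ has more than one source.

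First I would identify all possible sources of $E_X$. Within $N_\a$, the chain $X_{\a_0} \la X_{\a_1} \la \cdots \la X_{\a_s}$ immediately yields an edge $\a_{k-1} \to \a_k$ in $E_X$ for every $k\ge 1$, so each $\a_k$ with $k\ge 1$ has an incoming edge and cannot be a source. The same argument in $N_\b$ rules out $\b_k$ for $k\ge 1$. Hence the only candidates for sources are $\a_0$ and $\b_0$, and since by hypothesis there are at least two sources, both $\a_0$ and $\b_0$ must in fact be sources of $E_X$.

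Next I would translate ``source'' into the desired inequalities. Since $\a_0$ is a source, no agent envies $\a_0$ in $X$; in particular $\b_0$ does not envy $\a_0$, so $X_{\b_0} \ge_\b X_{\a_0}$. Because $X$ is a partition of $M'$ into non-empty bundles (recall $X_i \neq \emptyset$ for every $i$ by the standing assumption), the sets $X_{\a_0}$ and $X_{\b_0}$ are distinct, so non-degeneracy of the instance upgrades this to $X_{\b_0} \gb X_{\a_0}$. Symmetrically, $\b_0$ being a source gives $X_{\a_0} \ga X_{\b_0}$.

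There is really no obstacle here beyond bookkeeping: the assumption that $E_X$ has more than one source was engineered precisely to force both ``lowest'' agents to be sources, and the non-degeneracy hypothesis converts the weak inequalities coming from non-envy into the strict inequalities stated in the lemma. So the proof is a two-line deduction once the source-identification is in place, and no clever construction is needed.
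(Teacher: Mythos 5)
Your proof is correct and follows the same route as the paper: identify $\a_0$ and $\b_0$ as the only possible sources of $E_X$, conclude from the more-than-one-source assumption that both are sources, and read off the inequalities from the absence of envy toward a source (with non-degeneracy and non-emptiness of the bundles upgrading $\ge$ to $>$, a detail the paper leaves implicit). No issues.
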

\begin{proof}
Note that $\a_1,\dots,\a_s$ and $\b_1,\dots,\b_t$ are not sources in $E_X$.
Since we assume that $E_X$ has more than one source, both $\a_{0}$ and $\b_{0}$ are sources in $E_X$.
This shows that $X_{\a_0} \ga X_{\b_0}$ and $X_{\b_0} \gb X_{\a_0}$.
\end{proof}

\begin{lemma}\label{ob: championcycle}
We have $\a_0 \in A_X(X_{\b_0} \cup g)$ and $\b_0 \in A_X(X_{\a_0} \cup g)$.
\end{lemma}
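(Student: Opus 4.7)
The plan is to argue purely from the definition of $A_X$ together with two of the standing assumptions: the non-existence of self-champions, and the fact that $X_{\a_0}$ and $X_{\b_0}$ are minimal in $v_\a$ and $v_\b$ respectively within $N_\a$ and $N_\b$. Since the two claims are symmetric in $\a$ and $\b$, I will prove $\a_0 \in A_X(X_{\b_0} \cup g)$; the statement $\b_0 \in A_X(X_{\a_0} \cup g)$ then follows by swapping the roles of $\a$ and $\b$.

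To begin, I would verify that $A_X(X_{\b_0} \cup g)$ is nonempty: the hereafter assumption applied to $\b_0$ produces an agent who EFX envies $\b_0$ once $\b_0$ receives $X_{\b_0} \cup g$, and that agent in particular strictly prefers $X_{\b_0} \cup g$ to her own bundle, so her value $\kappa_X(\cdot, X_{\b_0} \cup g)$ is finite. Pick any $c \in A_X(X_{\b_0} \cup g)$ and let $P := P_X(c, X_{\b_0} \cup g)$, so that $|P| = \kappa_X(X_{\b_0} \cup g)$ and $P >_c X_c$.

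The key step is to rule out $c \in N_\b$. The case $c = \b_0$ is forbidden since self-champions are not allowed. If $c \in N_\b \setminus \{\b_0\}$, the ordering of $N_\b$ gives $X_c \gb X_{\b_0}$, hence $P \gb X_c \gb X_{\b_0}$, which in turn yields $\kappa_X(\b_0, X_{\b_0} \cup g) \le |P| = \kappa_X(X_{\b_0} \cup g)$; since $\kappa_X(X_{\b_0} \cup g)$ is the minimum over agents, equality is forced and $\b_0$ is exhibited as a self-champion, a contradiction. Therefore $c \in N_\a$, and the ordering of $N_\a$ gives $X_c \ge_\a X_{\a_0}$. Combined with $P \ga X_c$ this delivers $P \ga X_{\a_0}$, so $\kappa_X(\a_0, X_{\b_0} \cup g) \le |P| = \kappa_X(X_{\b_0} \cup g)$; minimality again gives equality and hence $\a_0 \in A_X(X_{\b_0} \cup g)$. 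I foresee no serious obstacle; the only delicate step is excluding a same-type champion of $\b_0$, and this is exactly where the minimality of $X_{\b_0}$ in the $v_\b$-ordering combines with the no-self-champion hypothesis to close the case.
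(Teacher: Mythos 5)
Your proof is correct and follows essentially the same argument as the paper: take an arbitrary most envious agent for $X_{\b_0}\cup g$, rule out membership in $N_\b$ via the no-self-champion assumption and the minimality of $X_{\b_0}$ in the $\lb$-order, and then transfer the bound on $\kappa$ to $\a_0$ via the minimality of $X_{\a_0}$ in the $\la$-order. The only cosmetic difference is that you establish nonemptiness of $A_X(X_{\b_0}\cup g)$ from the standing EFX-envy assumption, whereas the paper simply observes $X_{\b_0} \lb X_{\b_0}\cup g$ in a non-degenerate instance; both are fine.
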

\begin{proof}
We first show $\a_0 \in A_X(X_{\b_0} \cup g)$.
Since $X_{\b_0} \lb X_{\b_0}\cup g$, we have $A_X(X_{\b_0}\cup g) \neq \emptyset$, that is, there exists an agent $i \in N_\a \cup N_\b$ such that $i \in A_X(X_{\b_0}\cup g)$.
If $i \in N_\b$, since $X_{\b_0} \le_{\b} X_i \lb P_X(i, X_{\b_0}\cup g)$, 
we have $\kappa_X(\b_0, X_{\b_0} \cup g) \le \kappa_X(i, X_{\b_0} \cup g)=\kappa_X(X_{\b_0} \cup g)$.
Thus, $\b_0 \in A_X(X_{\b_0}\cup g)$ and this contradicts the assumption that there are no self-champions.
Therefore we have $i \in N_\a$.
Then, since $X_{\a_0} \le_{\a} X_i \la P_X(i, X_{\b_0}\cup g)$, 
we have $\kappa_X(\a_0, X_{\b_0} \cup g) \le \kappa_X(i, X_{\b_0} \cup g)=\kappa_X(X_{\b_0} \cup g)$.
Thus, $\a_0 \in A_X(X_{\b_0}\cup g)$.
In a similar way, we also have $\b_0 \in A_X(X_{\a_0} \cup g)$.
\end{proof}

By Lemma~\ref{ob: championcycle}, $X_{\a_0} \la X_{\b_0}\cup g$ and $X_{\b_0} \lb X_{\a_0}\cup g$.
Hence we can define $P_X(\a_0, X_{\b_0} \cup g)$ and $P_X(\b_0, X_{\a_0} \cup g)$.
We define a new allocation $X'$ as follows: 
\begin{align*}
X'_{\a_0} &= (X_{\a_0} \cup P_X(\a_0, X_{\b_0} \cup g)) \setminus P_X(\b_0, X_{\a_0} \cup g),& \\
X'_{\b_0} &= (X_{\b_0} \cup P_X(\b_0, X_{\a_0} \cup g))\setminus P_X(\a_0, X_{\b_0} \cup g), &\\
X'_{k} &= X_{k} & ~{\rm for~all}~ k \in  N\setminus \{\a_0, \b_0\}.&
\end{align*}

\begin{figure}[htbp]
 \begin{tabular}{cc}
 \begin{minipage}[t]{0.5\hsize}
  \begin{center}
   \includegraphics[width=60mm]{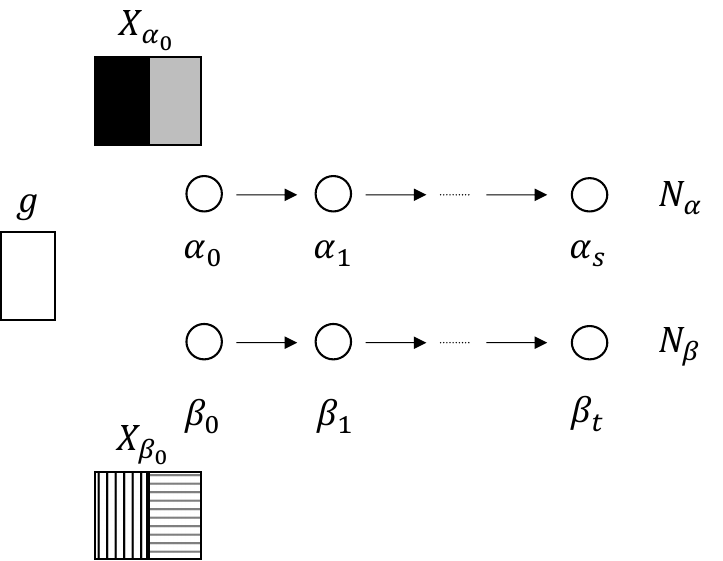}
   \captionsetup{width=.95\linewidth}
  \caption{Envy-graph $E_X$ (the edge set of $E_X$ is only partially drawn). The vertical stripe area is $P_X(\a_0, X_{\b_0} \cup g)\setminus g$ and the black area is $P_X(\b_0, X_{\a_0} \cup g)\setminus g$.}
  \label{fig: 1}
   \end{center}
 \end{minipage}
 \hfill
  \begin{minipage}[t]{0.5\hsize}
  \begin{center}
   \includegraphics[width=60mm]{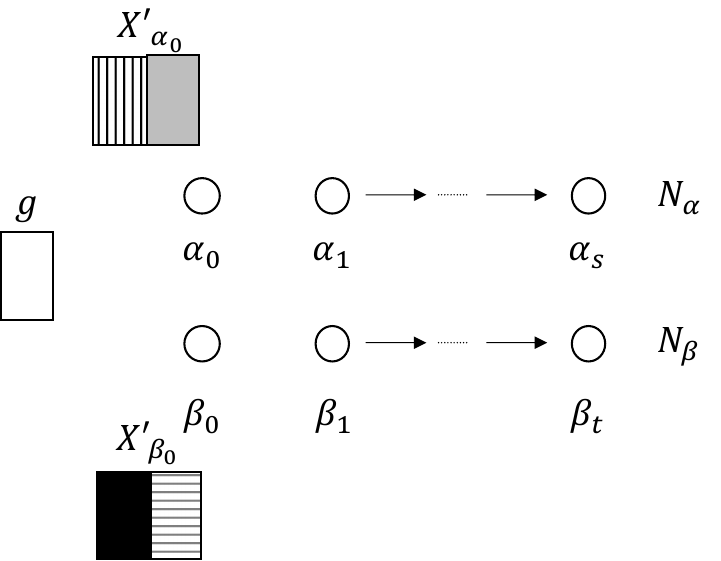}
   \captionsetup{width=.9\linewidth}
  \caption{Envy-graph $E_{X'}$ (the edge set of $E_{X'}$ is only partially drawn) obtained by exchanging $P_X(\a_0, X_{\b_0}\cup g) \setminus g$ and $ P_X(\b_0, X_{\a_0}\cup g) \setminus g$ in $X$.}
  \label{fig: 2}
  \end{center}
  
 \end{minipage}
  \end{tabular}
\end{figure}
Note that by Lemmas~\ref{ob: noenvy} and \ref{ob: championcycle}, we have $X_{\b_0} \la X_{\a_0} \la P_X(\a_0, X_{\b_0} \cup g)$ and 
$X_{\a_0} \lb X_{\b_0} \lb P_X(\b_0, X_{\a_0} \cup g)$.
This implies that $g\in P_X(\a_0, X_{\b_0} \cup g)$ and $g \in P_X(\b_0, X_{\a_0} \cup g)$. 
Thus $g$ is still an unallocated item in $X'$.
Figure~\ref{fig: 1} shows the envy-graph $E_{X}$, the bundles of $\a_{0}$ and $\b_{0}$, and an unallocated item $g$.
Figure~\ref{fig: 2} shows the envy-graph $E_{X'}$ obtained by exchanging $P_X(\a_0, X_{\b_0}\cup g) \setminus g$ and $ P_X(\b_0, X_{\a_0}\cup g) \setminus g$.
The following lemma shows that $\a_0$ and $\b_0$ are strictly better off in $X'$ than in $X$.
\begin{lemma}\label{ob: better}
We have $X'_{\a_0} \ga X_{\a_0} \ga X_{\b_0} \ga X'_{\b_0}$ and $X'_{\b_0} \gb X_{\b_0} \gb X_{\a_0} \gb X'_{\a_0}$.
\end{lemma}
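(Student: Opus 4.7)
My plan is to reduce everything to the two inequalities $v_\a(P_\a)>v_\a(P_\b)$ and $v_\b(P_\b)>v_\b(P_\a)$, writing $P_\a:=P_X(\a_0,X_{\b_0}\cup g)$ and $P_\b:=P_X(\b_0,X_{\a_0}\cup g)$ for brevity. The two middle inequalities $X_{\a_0}\ga X_{\b_0}$ and $X_{\b_0}\gb X_{\a_0}$ are immediate from Lemma~\ref{ob: noenvy}, so only the four outer inequalities need work.

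First I would record the set-theoretic structure. Since $X_{\a_0}$ and $X_{\b_0}$ are disjoint and $g$ is unallocated, $X_{\a_0}\cap P_\a=\emptyset$, $X_{\b_0}\cap P_\b=\emptyset$, and (using that $g\in P_\a\cap P_\b$, already noted in the text) $P_\a\cap P_\b=\{g\}$. Also $P_\b\subseteq X_{\a_0}\cup g\subseteq X_{\a_0}\cup P_\a$, and symmetrically $P_\a\subseteq X_{\b_0}\cup P_\b$. Using additivity of $v_\a$ and $v_\b$, a short inclusion-exclusion then yields
\[
v_\a(X'_{\a_0})=v_\a(X_{\a_0})+v_\a(P_\a)-v_\a(P_\b),\qquad v_\a(X'_{\b_0})=v_\a(X_{\b_0})-\bigl(v_\a(P_\a)-v_\a(P_\b)\bigr),
\]
and two analogous identities for $v_\b$. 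Once we know $v_\a(P_\a)>v_\a(P_\b)$ and $v_\b(P_\b)>v_\b(P_\a)$, all four outer inequalities fall out of these four identities.

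The core step is proving the two key inequalities, and here I would use the standing assumption that there are no self-champions. By Lemma~\ref{ob: championcycle} we have $\b_0\in A_X(X_{\a_0}\cup g)$, so $\kappa_X(X_{\a_0}\cup g)=|P_\b|$; and since $\a_0$ is not a self-champion, $\a_0\notin A_X(X_{\a_0}\cup g)$. Non-degeneracy gives $X_{\a_0}\cup g\ga X_{\a_0}$, so $\kappa_X(\a_0,X_{\a_0}\cup g)$ is finite and therefore strictly greater than $|P_\b|$. Applying the minimum-preferred-set property ``$X_i\ge_i Z$ for any $Z\subseteq S$ with $|Z|\le|P_X(i,S)|-1$'' to $\a_0$ and $S=X_{\a_0}\cup g$ then shows that every subset of $X_{\a_0}\cup g$ of cardinality at most $|P_\b|$ is $\le_\a X_{\a_0}$; in particular $P_\b\le_\a X_{\a_0}$. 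Combined with $P_\a\ga X_{\a_0}$ from the very definition of $P_\a$, this gives $v_\a(P_\a)>v_\a(X_{\a_0})\ge v_\a(P_\b)$. The symmetric argument, with $\a_0$ and $\b_0$ swapped, delivers $v_\b(P_\b)>v_\b(P_\a)$.

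The main obstacle I anticipate is recognizing exactly how the no-self-champion hypothesis, in tandem with Lemma~\ref{ob: championcycle}, forces $\kappa_X(\a_0,X_{\a_0}\cup g)>|P_\b|$ and thereby upper-bounds $v_\a(P_\b)$ by $v_\a(X_{\a_0})$ (and symmetrically for $v_\b(P_\a)$); everything else is routine bookkeeping with additive valuations.
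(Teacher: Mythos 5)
Your proof is correct and follows essentially the same route as the paper's: the heart of both arguments is the chain $P_X(\a_0,X_{\b_0}\cup g)\ga X_{\a_0}\ga P_X(\b_0,X_{\a_0}\cup g)$, obtained from the definition of the minimum preferred set together with Lemma~\ref{ob: championcycle} and the no-self-champion assumption (you simply spell out the $\kappa$-comparison that the paper leaves implicit). Your explicit inclusion--exclusion identities are just a more formal rendering of the paper's observation that $\a_0$ receives the more preferred part and gives away the less preferred part, so no further comment is needed.
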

\begin{proof}
By the definition of minimum preferred set, $P_X(\a_0, X_{\b_0} \cup g) \ga X_{\a_0}$.
Furthermore, since $\a_0 \not\in A_X(X_{\a_0} \cup g)$, we have $X_{\a_0} \ga P_X(\b_0, X_{\a_0} \cup g) $.
By combining these two inequalities, we obtain $P_X(\a_0, X_{\b_0} \cup g) \ga P_X(\b_0, X_{\a_0} \cup g)$.
We also see that $P_X(\a_0, X_{\b_0} \cup g) \setminus g \ga P_X(\b_0, X_{\a_0} \cup g) \setminus g$ as the valuation function is additive.
This means that an agent $\a_0$ receives the more preferred bundle from $\b_0$ and passes the less preferred bundle to $\b_0$ in her valuation. 
Thus, we have $X'_{\a_0} \ga X_{\a_0}$ and $X_{\b_0} \ga X'_{\b_0}$, and hence $X'_{\a_0} \ga X_{\a_0} \ga X_{\b_0} \ga X'_{\b_0}$ by Lemma~\ref{ob: noenvy}.
By symmetry, we also have $X'_{\b_0} \gb X_{\b_0} \gb X_{\a_0} \gb X'_{\a_0}$.
\end{proof}


By Lemma~\ref{ob: better}, $\a_0$ and $\b_0$ are strictly better off in $X'$ and the other agents do not change their bundles.
Hence, $X'$ Pareto dominates $X$. 

\subsection{Elimination of EFX envy}
In this subsection, we reduce the bundles of $\a_0$ and $\b_0$ in order to eliminate EFX envy.
As mentioned in Lemma~\ref{six}, it is enough to consider EFX envy from $\a_1, \b_0$ or $\b_1$ to $\a_0$ and from $\a_0, \a_1$ or $\b_1$ to $\b_0$.

By Lemma~\ref{ob: better}, $\a_0$ and $\b_0$ do not envy each other in $X'$.
In addition, since $X'_{\a_1} = X_{\a_1} \ga X_{\a_0} \ga X_{\b_0} \ga X'_{\b_0}$, $\a_1$ does not envy $\b_0$ in $X'$.
Similarly since $X'_{\b_1} = X_{\b_1} \gb X_{\b_0} \gb X_{\a_0} \gb X'_{\a_0}$, $\b_1$ does not envy $\a_0$ in $X'$.
Thus, there is possible EFX envy only from $\a_1$ to $\a_0$ or from $\b_1$ to $\b_0$.
If $\a_1$ envies $\a_0$, or $\b_1$ envies $\b_0$, 
then we can define $P_{X'} (\a_1, X'_{\a_0})\subseteq X'_{\a_0}$ or $P_{X'} (\b_1, X'_{\b_0}) \subseteq X'_{\b_0}$, respectively.
We define a new allocation $X''$ as follows:
\begin{align*}
X''_{\a_0} &=  \left\{
    \begin{array}{ll}
      P_{X'} (\a_1, X'_{\a_0}) & {\rm if~ \a_1~ EFX ~envies~ \a_0~ in}~ X',\\
      X'_{\a_0} & {\rm otherwise}.
    \end{array}
  \right. &\\
X''_{\b_0} &=  \left\{
    \begin{array}{ll}
      P_{X'} (\b_1, X'_{\b_0}) & {\rm if~ \b_1~ EFX ~envies~ \b_0~ in}~ X',\\
      X'_{\b_0} & {\rm otherwise}.
    \end{array}
  \right. &\\
X''_{k} &= X'_{k} & ~{\rm for~all}~ k \in  N\setminus \{\a_0, \b_0\}.&
\end{align*}
We will show that $X''$ Pareto dominates $X$, and $X''$ is EFX.
By the definition of the minimum preferred set, we have  $P_{X'} (\a_1, X'_{\a_0}) \ga  X'_{\a_1}$.
Also note that $X'_{\a_1} = X_{\a_1} \ga X_{\a_0}$.
Thus, $P_{X'} (\a_1, X'_{\a_0}) \ga X_{\a_0}$.
We also have $X'_{\a_0} \ga X_{\a_0}$ by Lemma~\ref{ob: better}.
Therefore, we have $X''_{\a_0}\ga X_{\a_0}$.
Similarly, we have $X''_{\b_0}\gb X_{\b_0}$.
Since $X''_{k} = X'_{k} = X_{k}$ for all $k \in  N\setminus \{\a_0, \b_0\}$, $X''$ Pareto dominates $X$.
We next show that $X''$ is EFX.
By Lemma~\ref{six}, it is enough to consider EFX envy from $\a_1, \b_0$ or $\b_1$ to $\a_0$ and from $\a_0, \a_1$ or $\b_1$ to $\b_0$.

\begin{itemize}
\item {\it $\a_1$ does not EFX envy $\a_0$}: 
If $X''_{\a_0}= X'_{\a_0}$, then by the definition of $X''_{\a_0}$, $\a_1$ does not EFX envy $\a_0$.
Thus we assume that $X''_{\a_0}= P_{X'} (\a_1, X'_{\a_0})$.
By the definition of the minimum preferred set, we have $P_{X'}(\a_1,X'_{\a_0})\setminus h \la X'_{\a_1} = X''_{\a_1}$ for any item $h \in P_{X'}(\a_1,X'_{\a_0})$.
Thus, $\a_1$ does not EFX envy $\a_0$ in $X''$.

\item {\it $\b_1$ does not EFX envy $\b_0$}: 
By symmetry, it is the same as above.

\item {\it $\a_1$ does not envy $\b_0$}:
By Lemma~\ref{ob: better}, we have $X''_{\a_1} =X_{\a_1} \ga X_{\a_0} \ga X'_{\b_0}$.
Since $X''_{\b_0} \subseteq X'_{\b_0}$, it holds that $X''_{\b_0} \le_{\a} X'_{\b_0}$.
By combining above two inequalities, we obtain $X''_{\a_1}\ga X''_{\b_0}$.
Thus, $\a_1$ does not envy $\b_0$.
\item {\it $\b_1$ does not envy $\a_0$}:  
By symmetry, it is the same as above.

\item {\it $\a_0$ does not envy $\b_0$}:
If $X''_{\a_0}= X'_{\a_0}$, then by Lemma~\ref{ob: better}, we have $X''_{\a_0}= X'_{\a_0} \ga X'_{\b_0} \ge_{\a} X''_{\b_0}$.
Thus, $\a_0$ does not envy $\b_0$.
If $X''_{\a_0}=P_{X'} (\a_1, X'_{\a_0})$, by the definition of the minimum preferred set, we have $P_{X'} (\a_1, X'_{\a_0}) \ga  X'_{\a_1}$.
In addition, by Lemma~\ref{ob: better}, $X'_{\a_1}=X_{\a_1} \ga X_{\a_0} \ga X'_{\b_0} \ge_{\a} X''_{\b_0}$.
By combining above two inequalities, we obtain $X''_{\a_0} \ga  X''_{\b_0}$.
Thus, $\a_0$ does not envy $\b_0$.
\item {\it $\b_0$ does not envy $\a_0$}: 
By symmetry, it is the same as above.
\end{itemize}
Therefore $X''$ Pareto dominates $X$, and $X''$ is EFX.

As a result of our discussion, it follows that Theorem~\ref{thm: potential} holds.
Therefore, Theorem~\ref{thm: main} also holds.

\section{Conclusion}
\label{sec: conclusion}
In this paper, we have shown that complete EFX allocations always exist when every agent's valuation function is additive and of two types.
This result extends the result shown in~\cite{plaut2020almost} in the sense that the number of valuation types becomes two under the assumption that each valuation is additive.
Our proof is constructive and our goal is achieved by iteratively obtaining a Pareto dominating EFX allocation from an existing EFX allocation.
We note that as shown in~\cite{chaudhury2020efx}, this method no longer works when there are three agents with additive valuations.
Therefore, a more flexible approach will be needed to deal with more general cases.
The major open problem is whether an EFX allocation always exists in the general case and 
we believe that our result is a sure step for solving this standing open problem.
The next step would be to study the case where every agent's valuation function is additive and of three given types or the case where every agent's valuation function is general and of two given types. 
\subsection*{Acknowledgments}
The author would like to thank Yusuke Kobayashi for his generous support and useful discussion.
We also thank anonymous reviewers for their helpful comments, which
improve the proof of the main theorem.
This work was partially supported by the joint project of Kyoto University and Toyota Motor
Corporation, titled ``Advanced Mathematical Science for Mobility Society''.
\bibliography{efx} 
\bibliographystyle{plain} 

\end{document}